\documentclass[10pt,letter,epsfig]{IEEEtran}
\usepackage{amsthm}
\usepackage{amsmath}
\usepackage{algorithmic}
\usepackage[ruled]{algorithm2e}
\usepackage{mathrsfs}
\usepackage{graphicx}
\usepackage{subfigure}
\usepackage{cite}
\usepackage{bm,epsfig,amsthm,url}
\usepackage{indentfirst}
\usepackage{amssymb}
\usepackage{epstopdf}
\usepackage{xcolor}
\usepackage{mathtools}

\theoremstyle{definition}

\newtheorem{proposition}{Proposition}

\setlength{\abovecaptionskip}{-0.15pt}
\setlength{\belowcaptionskip}{-0.15pt}
\begin{document}
	\title{\LARGE Multi-Active/Passive-IRS Enabled Wireless Information and Power Transfer: Active IRS Deployment and Performance Analysis}
	
	\author{ Min~Fu,~\IEEEmembership{Member,~IEEE}, Weidong~Mei,~\IEEEmembership{Member,~IEEE}, and~Rui~Zhang,~\IEEEmembership{Fellow,~IEEE}\vspace{-2em}
		\thanks{ 
			This work is supported in part by MOE Singapore under Award T2EP50120-0024, National University of Singapore under Research Grant R-261-518-005-720, and The Guangdong Provincial Key Laboratory of Big Data Computing.
			(\it{Corresponding authors: Weidong Mei and Rui Zhang.})}
	\thanks{M. Fu is with the Department of Electrical and Computer Engineering, National University of Singapore, Singapore 117583 (e-mail:  fumin@nus.edu.sg).
    W. Mei is with the National Key Laboratory of Science and Technology on Communications, University of Electronic Science and Technologyof China, Chengdu 611731, China (e-mail: wmei@uestc.edu.cn) 
    R. Zhang is with School of Science and Engineering, Shenzhen Research Institute of Big Data, The Chinese University of Hong Kong, Shenzhen, Guangdong 518172, China (e-mail: rzhang@cuhk.edu.cn). He is also with the Department of Electrical and Computer Engineering, National University of Singapore, Singapore 117583 (e-mail: elezhang@nus.edu.sg).

}}
	
\maketitle

\setlength\abovedisplayskip{2pt}
\setlength\belowdisplayskip{2pt}
\setlength\abovedisplayshortskip{2pt}
\setlength\belowdisplayshortskip{2pt}
\setlength\arraycolsep{2pt}

\begin{abstract}
Intelligent reflecting surfaces (IRSs), active and/or passive, can be densely deployed in complex environments to significantly enhance wireless network coverage for both wireless information transfer (WIT) and wireless power transfer (WPT).
In this letter, we study the downlink WIT/WPT from a multi-antenna base station to a single-antenna user over a multi-active/passive IRS (AIRS/PIRS)-enabled wireless link.
In particular, we derive the location of the AIRS with those of the other PIRSs being fixed to maximize the received signal-to-noise ratio (SNR) and signal power at the user in the cases of WIT and WPT, respectively.
The derived solutions reveal that the optimal AIRS deployment is generally different for WIT versus WPT due to the different roles of AIRS-induced amplification noise.
Furthermore, both analytical and numerical results are provided to show the conditions under which the proposed AIRS deployment strategy yields superior performance to other baseline deployment strategies as well as the conventional all-PIRS enabled WIT/WPT.
\end{abstract}
\begin{IEEEkeywords}
Intelligent reflecting surface (IRS), active IRS, multi-IRS reflection, IRS deployment, wireless power transfer. 
\end{IEEEkeywords}

\vspace{-1em}
\section{Introduction}
 
Intelligent reflecting surface (IRS) \cite{Wu2021Tutorial} has received high attention in wireless communications due to its passive, full-duplex and controllable signal reflection, which can improve the spectral and energy efficiency of future wireless networks cost-effectively.
Specifically, IRS consists of a large array of passive reflecting elements, each of which can be dynamically tuned to alter the phase/amplitude of its reflected signal.
By efficiently integrating IRSs into future wireless networks, it is anticipated that a quantum leap in system performance can be attained by jointly designing the operations of IRS and existing active nodes in wireless networks (e.g., base station (BS), user terminal, relay, and so on) \cite{Wu2021Tutorial}.

The appealing prospects of IRS have spurred extensive research to investigate its performance gain under different system setups (see e.g., \cite{Wu2021Tutorial, Mei2022Multireflection} and the references therein).
For example, in complex environments with scattered blockages, densely deploying IRSs can provide a pronounced path diversity gain for the BS to establish multiple blockage-free multi-IRS-reflection enabled line-of-sight (LoS) links with distributed users, especially when the direct links from the BS to them are severely blocked \cite{Mei2022Multireflection}.
This has thus motivated recent studies \cite{Mei2021Cooperative, Mei2022MIMO, Mei2021BeamTraining, Wang2022MultipleRIS, Liu2022intelligent} on the multi-IRS-reflection aided wireless networks (the use of passive IRS (PIRS) only), which are shown to achieve even higher cooperative passive beamforming (CPB) gains than conventional single-/double-PIRS aided systems.
Despite the high path diversity and CPB gains of multi-PIRSs, their signal coverage performance is severely limited by the multiplicative path loss of the cascaded channel due to the passive reflection of PIRSs \cite{Mei2022Multireflection}.

To tackle the above issue, in this letter, we consider the joint use of active IRS (AIRS) and PIRS for multi-IRS-enabled wireless links to enhance the performance of both wireless information transfer (WIT) and wireless power transfer (WPT).
Different from PIRS, AIRS is equipped with negative resistance components and hence able to reflect the incident signal with power amplification gain\cite{You2021active}, which compensates for the multiplicative path loss more effectively than CPB gain.
There have been a handful of recent works on AIRS/PIRS-aided communication systems.
For example, in \cite{Liang2022dual} and \cite{Fu2022active}, the authors studied a double-IRS-reflection system with a pair of AIRS and PIRS and optimized their cooperative beamforming and deployment.
In \cite{Zhang2022multi}, the authors studied a general multi-IRS-reflection system with one AIRS and multiple distributed PIRSs, to select the optimal reflection path from the BS to the user.
However, the above works either consider the deployment of AIRS in a simple two-IRS system \cite{Fu2022active} or assume a fixed location of AIRS \cite{Liang2022dual, Zhang2022multi} to optimize the WIT performance. 
	It thus remains unknown how to optimize the location of an AIRS in a general multi-IRS system. 
	In addition, they do not consider the optimization of  WPT performance, while the optimal AIRS deployment in WIT may not  apply to WPT due to their different objectives and roles played by AIRS-introduced amplification noise.

\begin{figure}[t]
	\centering
	\includegraphics[scale=0.3]{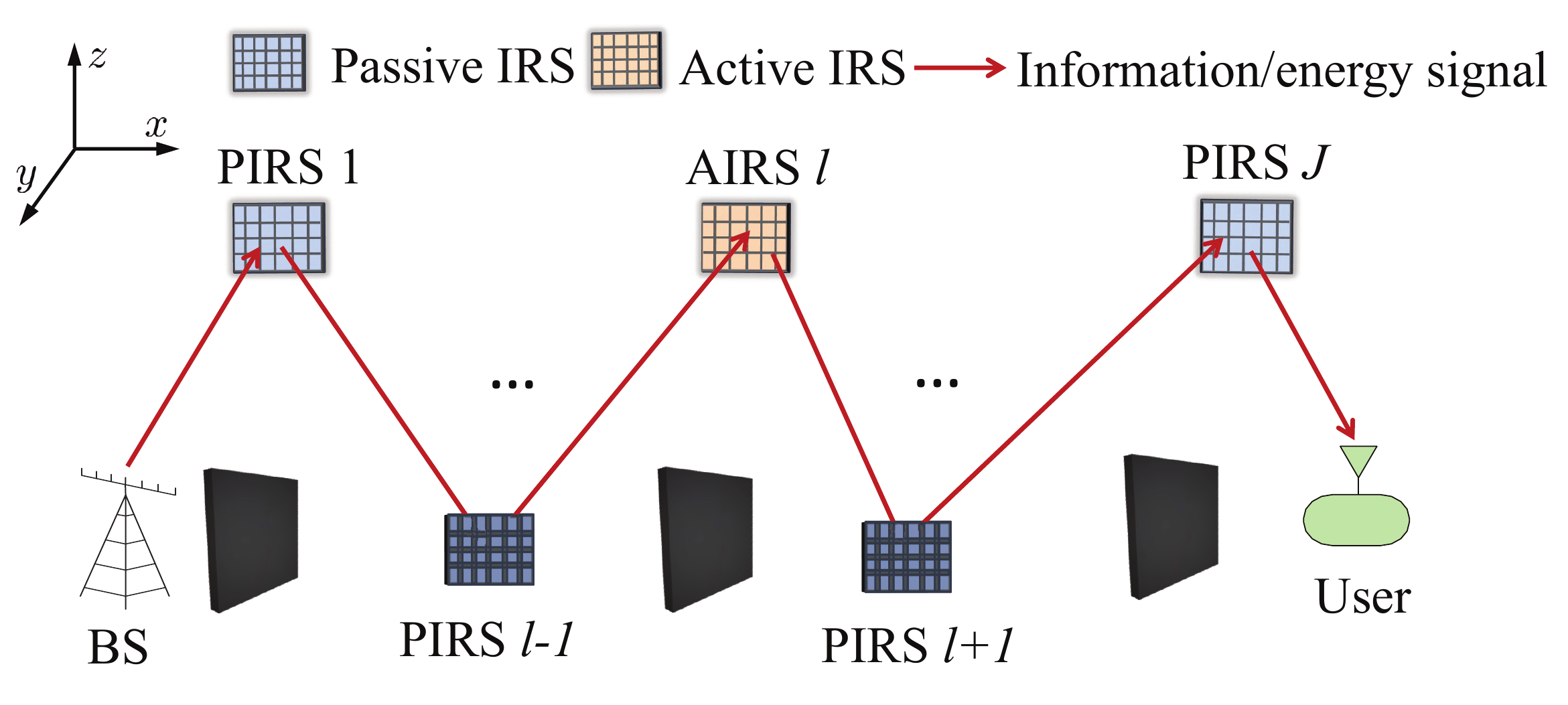}
	\vspace{-3mm}		
	\caption{Multi-active/passive-IRS enabled WIT/WPT.} \label{Fig:systemmodel}
	\vspace{-8mm}
\end{figure}

Inspired by the above, this letter studies the deployment of AIRS problem in multi-AIRS/PIRS enabled WIT/WPT, where a multi-antenna BS transmits information/energy wirelessly to a single-antenna user over a cascaded LoS link enabled by one AIRS and multiple PIRSs.
\textcolor{black}{As an initial study on this problem, we consider a simplified scenario with a single AIRS and a single user to facilitate the theoretical analysis of the AIRS deployment in WIT or WPT scenario.}
In particular, we aim to examine the deployment of the AIRS with those of the other PIRSs being fixed to maximize the received signal-to-noise ratio (SNR) or signal power at the user in the cases of WIT and WPT, respectively.
The optimal solutions in these two cases are derived in closed form, which reveals that the optimal AIRS deployment for WIT is generally different from that for WPT,  and depends on the number of reflecting elements per PIRS and the distance between two adjacent IRSs.
Furthermore, both analytical and numerical results are presented to show the conditions under which the proposed AIRS deployment solutions yield significant performance improvement than other baseline deployment strategies as well as the conventional all-PIRS enabled WIT/WPT.

 \vspace{-0.8em}     
 \section{System  Model}\label{Section:model}

 As shown in Fig.~\ref{Fig:systemmodel}, we consider the downlink WIT/WPT from an $M$-antenna BS to a single-antenna user with the help of $J$ IRSs, including $J-1$ PIRSs and one AIRS.
We consider a challenging scenario where, due to the dense and scattered obstacles in the environment, the BS can only transmit to the user over a multi-reflection path formed by these $J$ IRSs.
 For convenience, based on the distance from the BS to IRSs, we number them from 1 to $J$ (from the nearest to the farthest), with the AIRS numbered as $l$, $1 \le l \le J$, as depicted in Fig.~\ref{Fig:systemmodel}.
In this letter, we aim to study the optimal deployment of the AIRS over this multi-reflection path (i.e., the optimal index $l$) for achieving optimal WIT/WPT performance at the user.
We denote the set of $J$ IRSs and their formed multi-reflection path as ${\cal J}\triangleq \{1,2,\cdots, J\}$ and $\Omega$, respectively.
By properly placing these IRSs, we assume that an LoS link can be achieved between any two adjacent nodes in $\Omega$, such that a cooperative passive beamforming (CPB) gain can be reaped among them.
	The effects of all other randomly scattered links between the BS and IRSs are assumed to be negligible due to the lack of CPB and high multiplicative path loss over them \cite{Mei2022MIMO}.
As such, WIT/WPT performance depends only on the end-to-end channel gain of the LoS path $\Omega$.

Let $N_{p}/N_{a}$ denote the number of passive/active reflecting elements per PIRS/AIRS.
For each PIRS $k, k \in {\mathcal J}\!\setminus\! \{l\}$, let ${\bm \Phi}_k={\rm diag}\{e^{j\theta_{k,1}},\ldots,e^{j\theta_{k,N_{p}}}\} \in {\mathbb C}^{N_{p} \times N_{p}}$ denote its reflection coefficient matrix, where
$\theta_{k,n}\!\in\! [0,2\pi]$ denotes the phase shift of the $n$-th element, and its amplitude is set to one to maximize the reflected signal power \cite{Wu2021Tutorial}.
 While for the AIRS, since the incident signals to all of its reflecting elements experience the same path loss in the case of LoS channels, a common amplification factor $\eta$ can be used without performance loss, which also helps simplify the AIRS beamforming design.
Therefore, we denote its reflection coefficient matrix as $\eta {\bm \Phi}_l= {\rm diag}\{\eta e^{j\theta_{l,1}},\ldots,\eta e^{j\theta_{l, N_{a}}}\} \in {\mathbb C}^{N_{a} \times N_{a}}$.

Next, we characterize the LoS channel between any two adjacent nodes over $\Omega$.
For convenience, we refer to the BS and the user as nodes 0 and $J+1$, respectively.
Let $\bm H_{0,1}$, $\bm S_{k-1,k}$, and $\bm g^{\sf H}_{J, J+1}$ denote the baseband equivalent LoS channel from the BS to IRS 1, that from IRS $k-1$ to IRS $k$, $k \in \mathcal{J}$, and that from IRS $J$ to the user, respectively.
Let the BS-IRS 1 and IRS $J$-user distances be $d_{B}$ and $d_{U}$, respectively, while the distance between any two adjacent IRSs is assumed to be identical as $d_I$ (with a reference point selected at the BS and AIRS/PIRSs). Note that the results in this letter are also applicable to the case that the distances between any two adjacent IRSs are different.
It is also assumed that the above distances  are sufficiently large such that far-field propagation holds between any two adjacent nodes.
As such, the LoS channel between nodes $k$ and $k+1$  can be modeled as the product of transmit and receive array responses at the two sides.
Specifically, consider that the BS and each IRS are equipped with a uniform linear array (ULA) and a uniform planar array (UPA) parallel to the $x$-$z$ plane, respectively, as shown in Fig. \ref{Fig:systemmodel}.
For the ULA at the BS, its transmit array response with respect to (w.r.t.) IRS $1$ is written as $ {\tilde{\bm h}}_{0,1,t} \triangleq \bm u(\frac{2d}{\lambda}\cos\varphi^{t}_{0, 1}, M )\in \mathbb{C}^{M \times 1}$, where $\varphi^{t}_{0, 1}$ denotes the angle-of-departure (AoD) from the BS to IRS 1, $\lambda$ denotes the signal wavelength, $d$ denotes the spacing between two adjacent antennas/elements at the BS/each IRS, and $\bm u$ is the steering vector function defined as $\bm u (\varsigma, M')\triangleq [1,  e^{-j\pi\varsigma}, \ldots, e^{-j\pi(M' -1) \varsigma}]^{\sf T} \in \mathbb{C}^{M' \times 1}$.
For the UPA at each IRS, its transmit/receive array response vector can be expressed as the Kronecker product of two steering vector functions in the $x$- and $z$-axis directions, respectively.
In particular, let $N_{s,x}$ and $N_{s,z}$, $s\in \{p, a\}$, denote the numbers of elements at PIRS/AIRS in the $x$- and $z$-axis directions, respectively, with $N_{s} \!=\! N_{s,x} \!\times\! N_{s,z}$.
Then, the transmit array response of IRS $k$ w.r.t. node $k\!+\!1$ (IRS/user) is expressed as $ {\tilde{\bm s}}_{k,k+1,t}\triangleq \bm {u}(\frac{2d}{\lambda}\cos(\varphi^{t}_{k, k+1})\sin(\vartheta^t_{k, k+1}), N_{s,x}) \!\otimes\! \bm {u}(\frac{2d}{\lambda}\cos(\vartheta^t_{k, k+1}), N_{s,z})\in \mathbb{C}^{N_{s}\times 1}$, $s\in \{p,a\}$, where $\varphi^{t}_{k, k+1}$ and $\vartheta^t_{k, k+1}$ denote the azimuth and elevation AoDs from IRS $k$ to node $k+1$, respectively.
Similarly, we can define the receive array response of IRS $k+1$ w.r.t. node $k$ (IRS/BS) and denote it as ${\tilde{\bm s}}_{k,k+1,r}$.
As such, the BS-IRS 1, IRS $k$-IRS $k+1$, and IRS $J$-user LoS channels are respectively given by
\begingroup\makeatletter\def\f@size{9.1}\check@mathfonts\def\maketag@@@#1{\hbox{\m@th\normalsize\normalfont#1}}
\begin{eqnarray}
	\hspace{-1em}	{\bm H}_{0,1} &=& ({ \sqrt{\beta_0}}/{d^\frac{\alpha}{2}_{B}})e^{-\frac{j2\pi d_{B}}{\lambda}}{\tilde{\bm s}}_{0,1,r}{\tilde{\bm h}}^{\sf H}_{0,1,t}, \\
	\hspace{-1em}	{\bm S}_{k-1,k} &=& ({ \sqrt{\beta_0}}/{d^\frac{\alpha}{2}_{I}})e^{-\frac{j2\pi d_{I}}{\lambda}}{\tilde{\bm s}}_{k-1,k,r}{\tilde{\bm s}}^{\sf H}_{k-1,k,t}, k\!\in\! {\mathcal{J}}\!\setminus\!\{1\}, \\
	\hspace{-1em}	\bm g^{\sf H}_{J,J+1} &=& ({ \sqrt{\beta_0}}/{d_{U}^\frac{\alpha}{2}})e^{-\frac{j2\pi d_{U}}{\lambda}} {\tilde{\bm s}}^{\sf H}_{J,J+1,t},
\end{eqnarray}\normalsize
where $\alpha$ and $\beta_0$ respectively denote the LoS path-loss exponent and path gain at the reference distance of one meter (m).
We assume that these large-scale channel parameters have been measured through offline training \cite{Mei2021BeamTraining}.  Let $\bm w \in \mathbb{C}^{M\times 1}$ and $P_t$ denote the BS’s transmit beamforming vector and power, respectively, with $\|\bm w\|^2 = P_t$.
Then, given the index of the AIRS, $l$, its effective channels with the BS and the user can be respectively expressed as
\begingroup\makeatletter\def\f@size{9.1}\check@mathfonts\def\maketag@@@#1{\hbox{\m@th\normalsize\normalfont#1}}
\begin{eqnarray}
	\hspace{-1em} \bm h_{TA}(l) &=& \Big(\prod\nolimits_{k=1}^{l-1}\bm S_{k,k+1}\bm \Phi_{k}\Big)\bm H_{0, 1}\bm w \nonumber\\
	\hspace{-1em}	& = &\!\kappa_{B}\kappa^{l-1}_{I}e^{\frac{-j2\pi D_{1} }{\lambda}} {\tilde{\bm s}}_{l-1,l,r}\big({\tilde{\bm h}}^{\sf H}_{0,1,t}\bm w\big)\prod\nolimits_{k=1}^{l-1}A_{k}, \label{Eq:Tx-AIRS channel}\\
	\hspace{-1em}\bm h^{\sf H}_{AR}(l) &=& \bm g^{\sf H}_{J, J+1}\prod\nolimits_{k=l}^{J-1}\bm \Phi_{k+1}\bm S_{k, k+1} \nonumber\\
	\hspace{-1em}	& = & \kappa^{J-l}_{I}\kappa_{U}e^{\frac{-j2\pi D_{2} }{\lambda}} {\tilde{\bm s}}^{\sf H}_{J,J+1,t}\prod\nolimits_{k=l+1}^{J}A_{k}, \label{Eq:AIRS-Rx channel}
\end{eqnarray}\normalsize
where $A_{k} = {\tilde{\bm s}}^{\sf H}_{k,k+1,t}\bm \Phi_{k}{\tilde{\bm s}}_{k\!-\!1,k,r}, k \in \mathcal {J}$, $D_{1} \triangleq d_{B} + (l-1)d_{I}$ and $D_{2} \triangleq d_{U} + (J-l)d_{I}$ respectively denote BS-AIRS $l$ and AIRS $l$-user distances, and
$\kappa_{B} \triangleq { \sqrt{\beta_0}}/{d^\frac{\alpha}{2}_{B}}$, $\kappa_{I} \triangleq { \sqrt{\beta_0}}/{d^\frac{\alpha}{2}_{I}}$, and $\kappa_{U} \triangleq { \sqrt{\beta_0}}/{d^\frac{\alpha}{2}_{U}}$ denote the BS-IRS 1, inter-IRS, and IRS $J$-user LoS path gains, respectively.

Through the multi-reflection link $\Omega$, the received signal at the user is given by
\begin{eqnarray}\label{Eq:received signal}
	y_{r} = \bm h^{\sf H}_{AR}(l) \eta\bm \Phi_{l } (\bm h_{TA}(l) s + \bm{n}_{a}) + z,
\end{eqnarray}	
where $s$ denotes the transmitted information/energy symbol at the BS with $\mathbb{E}[\|s\|^2] = 1$,
$\bm{n}_{a}\in\mathbb{C}^{N_{a}\times 1}$ and $z\in \mathbb{C}$ denote the AIRS noise vector
and the AWGN noise, respectively.
It is assumed that $\bm{n}_{a}\sim \mathcal{CN}(\boldsymbol{0}_{N_{a}}, \sigma^2{\bf I}_{N_{a}})$ and $z\sim \mathcal{CN}(0,\sigma^2)$, where their power is assumed to be identical to $\sigma^2$.
Then, the reflected signal power by each AIRS element is given by $\eta^2\big(\kappa^2_B\kappa_I^{2(l-1)}|{\tilde{\bm h}}^{\sf H}_{0,1,t}\bm w|^2\prod_{k=1}^{l-1}|A_{k}|^2 + \sigma^2\big)$, which should satisfy the following power constraint:
\begin{eqnarray} \label{Cons:activeIRS}
	\eta^2\big(\kappa^2_B\kappa_I^{2(l-1)}|{\tilde{\bm h}}^{\sf H}_{0,1,t}\bm w|^2\prod\nolimits_{k=1}^{l-1}|A_{k}|^2 + \sigma^2\big) \leq P_{a},
\end{eqnarray} 
 where $P_a$ denotes the maximum amplification power per element of the AIRS.

 \vspace{-1.5em} 
 \section{Multi-AIRS/PIRS Enabled WIT}\label{Section:WIT}

 In this section, we first focus on the WIT case and derive the optimal BS/IRS beamforming in $\Omega$, as well as the optimal AIRS index $l$. Then, we analytically compare the optimal WIT performance with that of other baseline schemes to obtain important insights.

  \vspace{-1em}
 \subsection{Optimal BS/IRS Beamforming with given AIRS Deployment}\label{Section:WIT-beamforming}
 In the WIT case, the user aims to decode the information from its received signal in \eqref{Eq:received signal}.
 Hence, we should maximize its received SNR which is given by
 	\begingroup\makeatletter\def\f@size{9.2}\check@mathfonts\def\maketag@@@#1{\hbox{\m@th\normalsize\normalfont#1}}
 \begin{eqnarray}\label{Eq:SNR}
 	\hspace{-3em}	{\gamma}(l, \eta, \{\bm \Phi_k\},\bm w) = \frac{\eta^2\|\bm h^{\sf H}_{AR}(l) \bm \Phi_{l}\bm h_{TA}(l) \|^2}{\eta^2\|\bm h^{\sf H}_{AR}(l)\bm \Phi_{l}\|^2\sigma^2 +\sigma^2}.
 \end{eqnarray}\normalsize

 It is noted that the received SNR in \eqref{Eq:SNR} monotonically increases with $\eta^2$. Hence, for any given AIRS index $l$, the amplification power constraint in \eqref{Cons:activeIRS} must hold with equality to maximize \eqref{Eq:SNR}, which yields
 	\begingroup\makeatletter\def\f@size{9.2}\check@mathfonts\def\maketag@@@#1{\hbox{\m@th\normalsize\normalfont#1}}
 \begin{eqnarray}\label{Eq:AIRS amplitude}
 	\eta^2 &=& \frac{P_{a}}{\kappa^2_B\kappa_I^{2(l-1)}|{\tilde{\bm h}}^{\sf H}_{0,1,t}\bm w|^2\prod_{k=1}^{l-1}|A_{k}|^2 + \sigma^2}. 
 \end{eqnarray}\normalsize
 Furthermore, it can be verified that \eqref{Eq:SNR} monotonically increases with each $\lvert A_k\rvert, k \in \mathcal {J} $ and $|\tilde{\bm h}^{\sf H}_{0, 1, t} \bm w|$.
 As such, the optimal BS/IRS beamforming, i.e., $\bm w$ and $ \{\bm \Phi_k\}$, should be designed to maximize $|\tilde{\bm h}^{\sf H}_{0, 1, t} \bm w|$ and $\lvert A_k \rvert$, respectively, which results in
 \begin{eqnarray}
 	&& \bm w = \sqrt{P_t}{\tilde{\bm h}_{0, 1, t}}/{\|\tilde{\bm h}_{0, 1, t}\|}, \label{Eq:BS beam} \\
 	&&	\theta_{k,n} = \arg([\tilde{\bm s}_{k-1, k, r}]_n[\tilde{\bm s}^{\sf H}_{k, k+1, t}]_n), k \in \mathcal J, \forall n. \label{Eq:IRS phase}
 \end{eqnarray}
 By substituting \eqref{Eq:AIRS amplitude}-\eqref{Eq:IRS phase} into \eqref{Eq:SNR}, the user's received SNR can be simplified as
 \begin{eqnarray}\label{Eq:SNR-rewrite}
 	\hspace{-0.5em}	{\gamma}(l)
 	= \frac{C_aC_tN_a(N_{p}\kappa_I)^{2(J-1)}}{\sigma^2C_a(N_{p}\kappa_I)^{2(J-l)} +\sigma^2(C_t(N_{p}\kappa_I)^{2(l-1)} + \sigma^2)},
 \end{eqnarray}
 where $C_a \!=\! P_{a}N_a\kappa^2_{U}$, $C_t \!=\! P_{t}M\kappa^2_{B}$.
  In \eqref{Eq:SNR-rewrite}, it is noted that ${\gamma}(l) $ increases with $N_p$. In particular, as $N_p\rightarrow\infty$, we have
  \begingroup\makeatletter\def\f@size{9.2}\check@mathfonts\def\maketag@@@#1{\hbox{\m@th\normalsize\normalfont#1}}
 \begin{eqnarray}\label{Eq:SNR-Np}
 	{\gamma}(l)\rightarrow \begin{cases}
 		\frac{C_tN_a(N_{p}\kappa_I)^{2(l-1)}}{\sigma^2} = \mathcal{O}(N_{p}^{2(l-1)}) & \! {\text{if}~} l< \frac{J+1}{2},\\
 		\frac{C_aN_a(N_{p}\kappa_I)^{2(J-l)}}{\sigma^2} = \mathcal{O}(N_{p}^{2(J-l)})	&\! {\text{if}~} l\geq \frac{J+1}{2},
 	\end{cases}
 \end{eqnarray}\normalsize
 which shows that the CPB gain by the PIRSs over $\Omega$ is a piece-wise function of $l$.
 This is because due to the maximum per-element reflected power $P_a$, the information signal power and AIRS noise power at the user receiver scale with $N_p$ in the orders of $\mathcal{O}(N_{p}^{2(J-l)})$ and $\mathcal{O}(N_{p}^{2(J+1-2l)})$, respectively.
 Then, if $J+1-2l>0$, i.e., $l< \frac{J+1}{2}$, the amplification noise will dominate over the AWGN noise (which is of zero order of $N_p$).
 Thus, ${\gamma}(l)$ scales in the order of $\mathcal{O}(N_{p}^{2(l-1)})$, corresponding to the first case of \eqref{Eq:SNR-Np}.
 On the other hand, if $J+1-2l\leq0$, i.e., $l\geq \frac{J+1}{2}$, the zero-order AWGN noise will dominate over the amplification noise.
 Thus, ${\gamma}(l)$ has the same scaling behavior as the received information signal power, leading to the second case of \eqref{Eq:SNR-Np}.
 
 \vspace{-1.2em}                  
 \subsection{Problem Formulation}
Next, we formulate the AIRS deployment optimization problem to maximize the user's received SNR $\gamma(l)$.
First, to facilitate deployment optimization, we define
\begin{eqnarray}\label{Eq:f def}
	f(l) \triangleq \kappa^2_{B}(N_{p}\kappa_I)^{2(l-1)}, l \in \mathcal{J},
\end{eqnarray}
which denotes the effective channel power gain from the BS to the reference element of the AIRS in the case of a single-antenna BS, i.e., $M=1$. Note that due to the passive reflection of PIRSs, $f(l)$ monotonically decreases with $l$ (i.e., $N_p<\kappa^{-1}_I$) due to the increasing number of passive signal reflections and more power attenuation.
Then, the received SNR maximization problem can be formulated as
\begin{eqnarray}\label{problem:WIT-reduced}
	\hspace{-1.7em}\text{(P1)~}\mathop{ \text{maximize}}\limits_{l \in \cal J}	\frac{C_aC_tN_a(N_{p}\kappa_I)^{2(J-1)}}{\sigma^2\frac{C_a\kappa^2_{B}(N_{p}\kappa_I)^{2(J-1)}}{f(l)} +\sigma^2\frac{C_t}{\kappa^2_{B}} f(l) + \sigma^4}.
\end{eqnarray}
Note that as the numerator of \eqref{problem:WIT-reduced} is a constant, it is equivalent to minimizing its denominator by optimizing $l$.
However, as $l$ increases/decreases, $f(l)$ would decrease/increase, thus increasing/decreasing the first term in the denominator of \eqref{problem:WIT-reduced} while decreasing/increasing its second term.
It follows that the AIRS deployment $l$ should optimally balance these two terms to minimize the denominator of \eqref{problem:WIT-reduced}.
Next, the optimal solution to (P1) is derived in closed-form.
 
 \vspace{-1.3em}
 \subsection{Optimal Solution to (P1)}
 For (P1), we present its optimal solution in the following proposition.
 \begin{proposition}\label{Proposition: WIT}
 	The optimal solution to (P1), denoted as $l^\star_{1}$, is given by 
 	\begin{enumerate} 
 		\item Case I: if   ${C_a}<{C_t}$,
 		\begin{eqnarray}\label{Eq:solution-WIT-case1}
 			\hspace{-2em}	l^\star_{1} \!=\! 
 			\begin{cases}
 				\arg\max\limits_{l \in \{\lfloor \tilde{l} \rfloor , \lceil \tilde{l} \rceil \}}\gamma(l) &   {\text{if}~} N_p <  (\frac{C_a}{C_t})^\frac{1}{2(J-1)}\kappa^{-1}_I, \\
 				J	& {\text{if}~}(\frac{C_a}{C_t})^\frac{1}{2(J-1)}\kappa^{-1}_I \leq  N_p <\kappa^{-1}_I;\\ 
 			\end{cases}
 		\end{eqnarray}
 		\item Case II: if ${C_a}={C_t}$,  $l^\star_{1} = \arg\max\limits_{l \in \{\lfloor \frac{J+1}{2} \rfloor , \lceil \frac{J+1}{2} \rceil \}}\gamma(l)$;
 		\item Case III: if ${C_a}>{C_t}$,
 		\begin{eqnarray}\label{Eq:solution-WIT-case3}
 		\hspace{-2em}	l^\star_{1} \!=\! \begin{cases}
 				\arg\max\limits_{l \in \{\lfloor \tilde{l} \rfloor , \lceil \tilde{l} \rceil \}}\gamma(l)&   {\text{if}~}  N_p < (\frac{C_t}{C_a})^\frac{1}{2(J-1)}\kappa^{-1}_I, \\
 				1	& {\text{if}~}(\frac{C_t}{C_a})^\frac{1}{2(J-1)}\kappa^{-1}_I\leq N_p <\kappa^{-1}_I,\\
 			\end{cases}
 		\end{eqnarray}
 	\end{enumerate}
 	where $\tilde{l} = \frac{J+1}{2} + \frac{\log ({C_a/C_t})}{ 2\log(N^2_p\kappa^{2}_I)}$. 	  
 \end{proposition}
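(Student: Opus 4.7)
The plan is to reduce (P1) to a one-dimensional convex minimization and then invoke convexity to pin down the integer optimum. First, I would substitute $f(l)=\kappa_B^2(N_p\kappa_I)^{2(l-1)}$ directly into the denominator of the objective in \eqref{problem:WIT-reduced}. The $\kappa_B^2$ and $(N_p\kappa_I)^{2(J-1)}$ factors cancel cleanly, so after dropping the constant numerator and the additive $\sigma^4$ constant, maximizing $\gamma(l)$ is equivalent to minimizing
\begin{equation*}
g(l)\triangleq C_a(N_p\kappa_I)^{2(J-l)} + C_t(N_p\kappa_I)^{2(l-1)}
\end{equation*}
over $l\in\mathcal{J}$. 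Since $N_p<\kappa_I^{-1}$, the base $N_p\kappa_I$ lies in $(0,1)$, so $g$ is a sum of two exponentials in $l$ with opposite signs in the exponent; this makes $g$ strictly convex when $l$ is relaxed to a continuous variable.

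Next, I would relax $l\in\mathbb{R}$ and set $g'(l)=0$. Taking the derivative yields a balance condition $C_t(N_p\kappa_I)^{2(l-1)}=C_a(N_p\kappa_I)^{2(J-l)}$; solving this by taking logarithms and isolating $l$ reproduces the stationary point
\begin{equation*}
\tilde{l}=\frac{J+1}{2}+\frac{\log(C_a/C_t)}{2\log(N_p^2\kappa_I^2)}.
\end{equation*}
Strict convexity of $g$ on $\mathbb{R}$ then implies two facts I would use repeatedly: (i) if $\tilde{l}$ lies in the feasible interval $[1,J]$, the minimizer over $\mathcal{J}$ must be one of the two adjacent integers $\lfloor\tilde{l}\rfloor$ and $\lceil\tilde{l}\rceil$; (ii) if $\tilde{l}$ lies outside $[1,J]$, the minimizer over $\mathcal{J}$ is the endpoint of $\{1,\dots,J\}$ closer to $\tilde{l}$.

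The remaining work is a case split on the sign of $\log(C_a/C_t)$ and a determination of when $\tilde{l}$ crosses the endpoints $1$ or $J$. Since $\log(N_p^2\kappa_I^2)<0$ throughout the feasible regime, the correction $\frac{\log(C_a/C_t)}{2\log(N_p^2\kappa_I^2)}$ has the opposite sign to $\log(C_a/C_t)$. In Case I ($C_a<C_t$), this correction is positive and strictly increasing in $N_p$ on $(0,\kappa_I^{-1})$, so $\tilde{l}>(J+1)/2$; setting $\tilde{l}=J$ and solving for $N_p$ gives the threshold $(C_a/C_t)^{1/(2(J-1))}\kappa_I^{-1}$, and below it $\tilde{l}\in((J+1)/2,J)$ while above it $\tilde{l}\geq J$, yielding \eqref{Eq:solution-WIT-case1} via (i)-(ii). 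Case III ($C_a>C_t$) is completely symmetric about $l=(J+1)/2$, with the threshold obtained from $\tilde{l}=1$ being $(C_t/C_a)^{1/(2(J-1))}\kappa_I^{-1}$, yielding \eqref{Eq:solution-WIT-case3}. Case II ($C_a=C_t$) is immediate: the correction vanishes, so $\tilde{l}=(J+1)/2$ and (i) applies directly.

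The main obstacle is bookkeeping, not conceptual: I must track how dividing by the negative quantity $\log(N_p^2\kappa_I^2)$ flips inequalities, and verify that the two $N_p$-thresholds obtained from $\tilde{l}=1$ and $\tilde{l}=J$ actually lie in the physically meaningful range $(0,\kappa_I^{-1})$ in their respective cases (which they do precisely because $C_a/C_t<1$ in Case I and $C_t/C_a<1$ in Case III). Once these inequalities are handled carefully, the three cases assemble into the statement of the proposition.
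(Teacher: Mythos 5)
Your proposal is correct and follows essentially the same route as the paper: reduce (P1) to minimizing the denominator, relax $l$ to a continuous variable, solve the first-order condition to obtain the same $\tilde{l}$, and then round to $\lfloor\tilde{l}\rfloor/\lceil\tilde{l}\rceil$ or clip to the endpoints via the same $N_p$ thresholds. The only cosmetic difference is that you justify unimodality through strict convexity of the simplified sum of exponentials, whereas the paper argues via the sign of $\partial G/\partial l$ using the monotonicity of $f(l)$; both are valid and lead to identical case analysis.
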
 
 \begin{proof}
 	As previously mentioned, maximizing \eqref{problem:WIT-reduced} is equivalent to minimizing its denominator, i.e.,
 	\begin{eqnarray} \label{problem1:simplification}
 		\mathop{\text{minimize}}\limits_{l\in \mathcal{J}} &&  G(l) \triangleq \frac{C_a\kappa^2_{B}(N_{p}\kappa_I)^{2(J-1)}}{f(l)} +\frac{C_t}{\kappa^2_{B}} f(l), 
 	\end{eqnarray}
 	where the common scalar $\sigma^2$ and the constant $\sigma^4$ are omitted.
 	Next, we relax $l\in\mathcal{J}$ into $l \in [1,J]$ and take the derivative of $G(l)$ as
 	$	\frac{\partial G(l)}{\partial l} =  \frac{ C_tf^2(l)-   C_a\kappa^4_{B}(N_{p}\kappa_I)^{2(J-1)}}{\kappa^2_{B}f^2(l)} \frac{\partial f(l) }{\partial l}.$
 	Then, by letting $\frac{\partial G(l)}{\partial l} = 0$, we have $	\tilde{l} = \frac{J+1}{2} + \frac{\log (C_a/C_t)}{2\log(N^2_p\kappa^2_{I})}$.

 	As $f(l)$ decreases with $l$, we have $\frac{\partial f(l) }{\partial l}< 0$.
 	Thus, $G(l)$ decreases with $l$ when $l\in(0, \tilde{l} ]$ and increases with $l$ when $l\in [\tilde{l} ,+\infty]$.
 	However, $\tilde{l}$ is generally not an integer. Next, we elaborate on how to reconstruct the optimal integer $l$ for \eqref{problem1:simplification} based on $\tilde{l}$.

 	For Case I, if  ${C_a}<{C_t}$ and $N_p\geq (\frac{C_a}{C_t})^\frac{1}{2(J-1)}\kappa^{-1}_I $, we have $\tilde{l} \ge J$. As such, the optimal integer $l$ is given by $l^\star_{1} = J$, as shown in the second case of \eqref{Eq:solution-WIT-case1}.
 	On the other hand, if ${C_a}<{C_t}$ but  $N_p < (\frac{C_a}{C_t})^\frac{1}{2(J-1)}\kappa^{-1}_I$, we have $\frac{J+1}{2} \leq \tilde{l} < J$.
 	Thus, the optimal integer $l^\star_{1}$ should be either $\lfloor \tilde{l} \rfloor$ or $\lceil \tilde{l} \rceil$, which can be determined by comparing $\gamma(\lfloor \tilde{l} \rfloor)$  with $\gamma(\lceil \tilde{l} \rceil)$ and thus
 	gives rise to the first case of \eqref{Eq:solution-WIT-case1}.
 	
 	Cases II and III can be proved similarly to Case I and the details are omitted for brevity.
 \end{proof}
 \vspace{-0.5em} 
 Note that Case I (i.e, ${C_a}<{C_t}$ or $N_a < {P_tM\kappa_{B}^2}/{(P_a\kappa_{U}^2)}$) usually holds in practice since ${P_tM\kappa_{B}^2}/{(P_a\kappa_{U}^2)}$ is generally large due to $MP_t\gg P_a$, while $N_a$ is generally small or moderate to reduce the hardware cost of AIRS.
 It is also worth noting that in Case I, as $\tilde {l} > (J+1)/2$,
 the AIRS should be deployed in the second half of the IRSs.
 Furthermore, it is observed from \eqref{Eq:solution-WIT-case1} that 
 with increasing/decreasing $N_p\kappa_I$, the AIRS should be deployed closer to the final/middle IRS.
  Similarly, it is observed from \eqref{Eq:solution-WIT-case3} that in Case III, the AIRS should be deployed in the first half of the IRSs and closer to the BS/middle IRS with increasing/decreasing $N_p\kappa_I$.

 \vspace{-1em} 
 \subsection{Performance Comparison} \label{subsection:WIT}
  In the following, we analytically compare the optimal WIT performance with that by the following two benchmark schemes. 
 The first scheme deploys the AIRS at the middle of all IRSs, i.e., $l = \frac{J+1}{2}$ (assuming an odd $J$ for convenience), while the second one replaces the AIRS with a PIRS. 
 Note that for fairness of comparison, we set the BS's transmit power as $P_t+N_aP_a$ in the second benchmark scheme.
 By substituting $l^\star_{1}$ into \eqref{Eq:SNR-rewrite}, the maximum user's received SNR, denoted as ${\gamma}^\star_{\text{A}}$,  is given by \begin{eqnarray}\label{Eq:SNR-optimal}
 	\hspace{-1em}	{\gamma}^\star_{\text{A}} 	 
 	=\frac{C_aC_tN_a(N_{p}\kappa_I)^{2(J-1)}}{\sigma^2C_a(N_{p}\kappa_I)^{2(J-l^\star_{1})} +\sigma^2C_t(N_{p}\kappa_I)^{2(l^\star_{1}-1)} + \sigma^4}.
 \end{eqnarray}

 \subsubsection{Comparison with Scheme 1}
 With $l = \frac{J+1}{2}$, the user's received SNR in Scheme 1 is given by \begin{eqnarray}\label{Eq:SNR-mid}
 	{\gamma}^{\text{mid}}_{\text{A}}	 
 	=\frac{C_aC_tN_a(N_{p}\kappa_I)^{2(J-1)}}{\sigma^2C_a(N_{p}\kappa_I)^{(J-1)} +\sigma^2C_t(N_{p}\kappa_I)^{(J-1)} + \sigma^4}.
 \end{eqnarray}
 By comparing ${\gamma}^\star_{\text{A}}$ in \eqref{Eq:SNR-optimal} with ${\gamma}^{\text{mid}}_{\text{A}}$ in \eqref{Eq:SNR-mid}, we have
 \begingroup\makeatletter\def\f@size{9.2}\check@mathfonts\def\maketag@@@#1{\hbox{\m@th\normalsize\normalfont#1}}
 \begin{eqnarray}
 	\frac{{\gamma}^\star_{\text{A}}}{{\gamma}^{\text{mid}}_{\text{A}}}
 	&\!=\!&\frac{C_a(N_{p}\kappa_I)^{(J-1)} + C_t(N_{p}\kappa_I)^{(J-1)} + \sigma^2}{C_a(N_{p}\kappa_I)^{2(J-l^\star_{1})} +C_t(N_{p}\kappa_I)^{2(l^\star_{1}-1)} + \sigma^2} \nonumber \\
 	&{\geq}\! & \frac{C_a + C_t+ \sigma^2(N_{p}\kappa_I)^{(1-J)}}{C_a(N_{p}\kappa_I)^{(1-J)}\!+\!C_t(N_{p}\kappa_I)^{(J-1)}\!+\!\sigma^2(N_{p}\kappa_I)^{(1-J)}}\nonumber\\
 	& \overset{\sigma^2 \rightarrow 0}{\longrightarrow}& (N_{p}\kappa_I)^{(1-J)} \Big(1+ \frac{C_a-C_a(N_{p}\kappa_I)^{2(1-J)}}{C_a(N_{p}\kappa_I)^{2(1-J)}+C_t}\Big),
 	\label{NEq:SNR-optimal-mid}
 \end{eqnarray} \normalsize
 where the inequality is due to ${\gamma}^\star_{\text{A}}\geq {\gamma}(J)$.
 It is noted that the right-hand side (RHS) of \eqref{NEq:SNR-optimal-mid} increases with ${C_t}$ and decreases with ${C_a}$, which implies that in the high transmit power regime, the performance gain of the optimal deployment over Scheme 1 may become more significant as $M$ or $P_t$ increases and/or $N_a$ or $P_a$ decreases.

 \subsubsection{Comparison with Scheme 2}
 In Scheme 2 with all $J$ PIRSs, the maximum received SNR is given by \cite{Mei2021Cooperative}
 \begin{eqnarray}\label{Eq:all-passive-SNR}
 	\gamma_{\text{P}} = {(P_{\rm t} + N_aP_{a})M\kappa^2_B\kappa^2_UN^{2J}_{p}\kappa^{2(J-1)}_{I}/\sigma^2}. 
 \end{eqnarray}
 By comparing \eqref{Eq:all-passive-SNR} with \eqref{Eq:SNR-optimal},  it holds that $\frac{\gamma^\star_{\text{A}}}{\gamma_{\text{P}}}$ equals to
  \begingroup\makeatletter\def\f@size{9.2}\check@mathfonts\def\maketag@@@#1{\hbox{\m@th\normalsize\normalfont#1}}
 \begin{eqnarray}\label{Neq:WIT comparison}
 \hspace{-2em}&&	\frac{N_a^2P_tP_a}{(P_t \!+\! N_aP_a)N_p^2(C_a(N_{p}\kappa_I)^{2(J\!-\!l^\star_{1})} \!+\!C_t(N_{p}\kappa_I)^{2(l^\star_{1}\!-\!1)} \!+\! \sigma^2)} \nonumber\\
 \hspace{-2em} &&\geq \frac{N_a^2P_tP_a}{(P_t + N_aP_a)N_p^2(C_a \!+\!C_t(N_{p}\kappa_I)^{2(J-1)} + \sigma^2)},
 \end{eqnarray}	\normalsize
where we have applied ${\gamma}^\star_{\text{A}}\geq {\gamma}(J)$ again.
 It is noted that the RHS of \eqref{Neq:WIT comparison} decreases with $N_p$, which implies that the multi-AIRS/PIRS enabled system outperforms the all-PIRS system if $N_p$ is small. 
 This is expected, as the CPB gain by PIRSs is low and insufficient to compensate for the severe multiplicative path loss in the all-PIRS system with small $N_p$. However, the AIRS's amplification gain in the multi-AIRS/PIRS system compensates for the path loss more effectively.

 \vspace{-0.8em}
  \section{Multi-AIRS/PIRS Enabled WPT} \label{Section:WPT}
 In this section, we focus on the BS/IRS beamforming and AIRS deployment designs in the WPT case.
 \vspace{-1em}
 \subsection{Optimal BS/IRS Beamforming with given AIRS Deployment}\label{Section:WPT-beamforming}
In the WPT case,  the total received power at the user receiver is given by (excluding the receiver noise)
\begingroup\makeatletter\def\f@size{9.5}\check@mathfonts\def\maketag@@@#1{\hbox{\m@th\normalsize\normalfont#1}}
 \begin{eqnarray}\label{Eq:power}
 	\!	\!	\!{Q}(l, \eta, \{\!\bm \Phi_k \!\},\bm w) \!=\!  \eta^2\!(|\bm h^{\sf H}_{AR}(l) \bm \Phi_{l } \bm h_{TA}(l) |^2\!+\! \sigma^2\|\bm h^{\sf H}_{AR}(l)\|^2). 
 \end{eqnarray}\normalsize
Observed from \eqref{Eq:power}, the user can harvest power from both the  energy signal and the AIRS amplification noise.

Similar to Section \ref{Section:WIT}, it can be shown that the optimal AIRS amplification factor and BS/IRS beamforming that maximize \eqref{Eq:power} should satisfy \eqref{Eq:AIRS amplitude}-\eqref{Eq:IRS phase}.
By plugging them into  \eqref{Eq:power}, the user's received power can be simplified as
 \begin{eqnarray}\label{Eq:power-rewrite}
 	{Q}(l) =
 	\frac{C_aC_tN_a(N_{p}\kappa_I)^{2(J-1)} + \sigma^2C_a(N_{p}\kappa_I)^{2(J-l)}}{ C_t(N_{p}\kappa_I)^{2(l-1)} + \sigma^2 }.
 \end{eqnarray}
 It follows from \eqref{Eq:power-rewrite} that as $N_p\rightarrow\infty$, we have
 \begin{eqnarray}\label{Eq:power-Np}
 	{ Q}(l)\rightarrow C_aN_a(N_{p}\kappa_I)^{2(J-l)} = \mathcal{O}(N_{p}^{2(J-l)}),
 \end{eqnarray}
 which implies that similarly as in the WIT case,  the CPB gain by the PIRSs over $\Omega$ depends on $l$. 
The reason is that the signal power and AIRS noise power at the user receiver scale with $N_p$ in the orders of $\mathcal{O}(N_{p}^{2(J-l)})$ and $\mathcal{O}(N_{p}^{2(J+1-2l)})$, respectively.
Since the order of the former is no smaller than that of the latter,  the total received power should have the same scaling behavior as the former, i.e., $\mathcal{O}(N_{p}^{2(J-l)})$.

 \vspace{-1em}
 \subsection{Problem Formulation}
 Based on the above,  the received power maximization problem by optimizing $l$ is formulated as
 \begin{eqnarray}\label{problem:case2-reduced}
 \hspace{-1em}\text{(P2)~}	\mathop{ \text{maximize}}\limits_{l \in \cal J} &&   \frac{C_a(N_{p}\kappa_I)^{2(J-1)}(C_tN_af(l) + \sigma^2\kappa^2_{B})}{\frac{C_t}{\kappa^2_{B}}f^2(l) + \sigma^2f(l)},
 \end{eqnarray}
where $f(l)$ is defined in \eqref{Eq:f def}.
 It is observed from \eqref{problem:case2-reduced} that as $l$ increases, both its numerator and denominator would decrease.
 Next, the optimal solution to (P2) is derived.
 
 \vspace{-1em}
 \subsection{Optimal Solution to (P2)}
 For (P2), we present its optimal solution  in the following proposition.
 \begin{proposition}\label{Proposition: WPT}
 	The optimal solution to (P2),  denoted as $l^\star_{2} $, is given by $l^\star_{2} = J$.
 \end{proposition}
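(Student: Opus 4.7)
The plan is to reduce (P2) to a one-variable monotonicity problem by substituting the quantity $x \triangleq f(l) = \kappa_B^2(N_p\kappa_I)^{2(l-1)}$ introduced in \eqref{Eq:f def}. Since the letter assumes $N_p < \kappa_I^{-1}$, the map $l \mapsto f(l)$ is strictly positive and strictly decreasing on $\mathcal{J}$, so it suffices to analyze how the objective depends on $x$ rather than on $l$ directly.

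After discarding the positive $l$-independent factor $C_a(N_p\kappa_I)^{2(J-1)}$, the objective of (P2) becomes
$$h(x) = \frac{C_t N_a\, x + \sigma^2\kappa_B^2}{\tfrac{C_t}{\kappa_B^2}x^2 + \sigma^2 x}.$$
First I would show that $h$ is strictly decreasing on $(0, \infty)$. A direct quotient-rule computation yields
$$h'(x)\Bigl(\tfrac{C_t}{\kappa_B^2}x^2 + \sigma^2 x\Bigr)^2 = -\tfrac{C_t^2 N_a}{\kappa_B^2}x^2 - 2\sigma^2 C_t\, x - \sigma^4\kappa_B^2,$$
which is strictly negative for $x>0$ since every constant is positive. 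Combining this with $f$ decreasing in $l$ gives that $l \mapsto h(f(l))$ is strictly increasing, so its maximum over the discrete set $\mathcal{J}$ is attained at the right endpoint $l^\star_{2} = J$.

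The main (and really only) subtlety is the sign check on $h'(x)$: it is essential that every term in the numerator of the derivative has the same sign, and this should be emphasized as the structural reason why (P2) admits a corner optimum. This is precisely the difference from (P1), whose denominator in \eqref{problem:WIT-reduced} contains two terms scaling with opposite powers of $f(l)$ and therefore produces an interior optimum via a trade-off. In (P2), the additional $\sigma^2\|\bm h^{\sf H}_{AR}(l)\|^2$ term in \eqref{Eq:power} represents harvested AIRS amplification noise, which is constructive rather than interfering; hence pushing the AIRS as close to the user as possible simultaneously boosts the signal contribution and the harvested-noise contribution, and no trade-off arises. Once the sign of $h'$ is established, the conclusion $l^\star_{2}=J$ follows in one line and no case analysis analogous to that of Proposition~\ref{Proposition: WIT} is required.
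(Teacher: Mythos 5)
Your proposal is correct and follows essentially the same route as the paper: both establish that the objective of (P2) is monotonically increasing in $l$ by checking that the derivative of the ratio (in the paper, directly w.r.t.\ $l$ via the chain rule through $f(l)$; in your version, w.r.t.\ $x=f(l)$) has a numerator consisting solely of same-signed terms, so the maximum over $\mathcal{J}$ sits at the endpoint $l^\star_2=J$. Your explicit substitution $x=f(l)$ and the remark contrasting the corner optimum with the interior trade-off of (P1) are presentational refinements, not a different argument.
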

 \begin{proof}
 	The derivative of $Q(l)$ w.r.t. $l$ can be calculated as $\frac{\partial }{\partial l}Q(l)\!=
 		\!	 -C_a(N_{p}\kappa_I)^{2(J-1)}\frac{{C^2_tN_a}f^2(l) + 2{C_t\sigma^2}f(l)+\sigma^4\kappa^2_{B}}{\kappa^2_{B}(\frac{C_t}{\kappa^2_{B}}f^2(l) + \sigma^2f(l))^2}\frac{\partial }{\partial l}f(l)$.
 	As $f(l)$ monotonically decreases with $l$, we have $\frac{\partial }{\partial l}f(l) < 0$, thus resulting in $\frac{\partial }{\partial l}Q(l)>0, \forall l$. Hence, ${Q}(l)$ increases with $l$, and the optimal solution to (P2) is given by $l=J$. 
 \end{proof}
 \vspace{-0.5em} 
 Interestingly, Proposition \ref{Proposition: WPT} implies that different from WIT, AIRS should always be deployed at the location of the final IRS in WPT to maximize the received power.
 This is expected, since by deploying the AIRS closest to the user, its incident signal power can be most severely attenuated, thereby achieving the highest AIRS amplification gain for any given $P_a$. Meanwhile, this least attenuates the AIRS amplification noise power for any given $\sigma^2$, which is in favor of WPT at the user as well. It is worth noting that in the conventional single-PIRS system, it has been shown in \cite{Wu2022WIT-WPT} that the optimal PIRS deployment may be different in the case of WIT versus WPT in the multi-user scenario. While our results show that the optimal AIRS deployment may be different in these two scenarios even in the single-user case.

\begin{figure*}[t]
	\centering
	\begin{minipage}{.32\textwidth}
		\centering		
		\includegraphics[scale=0.35]{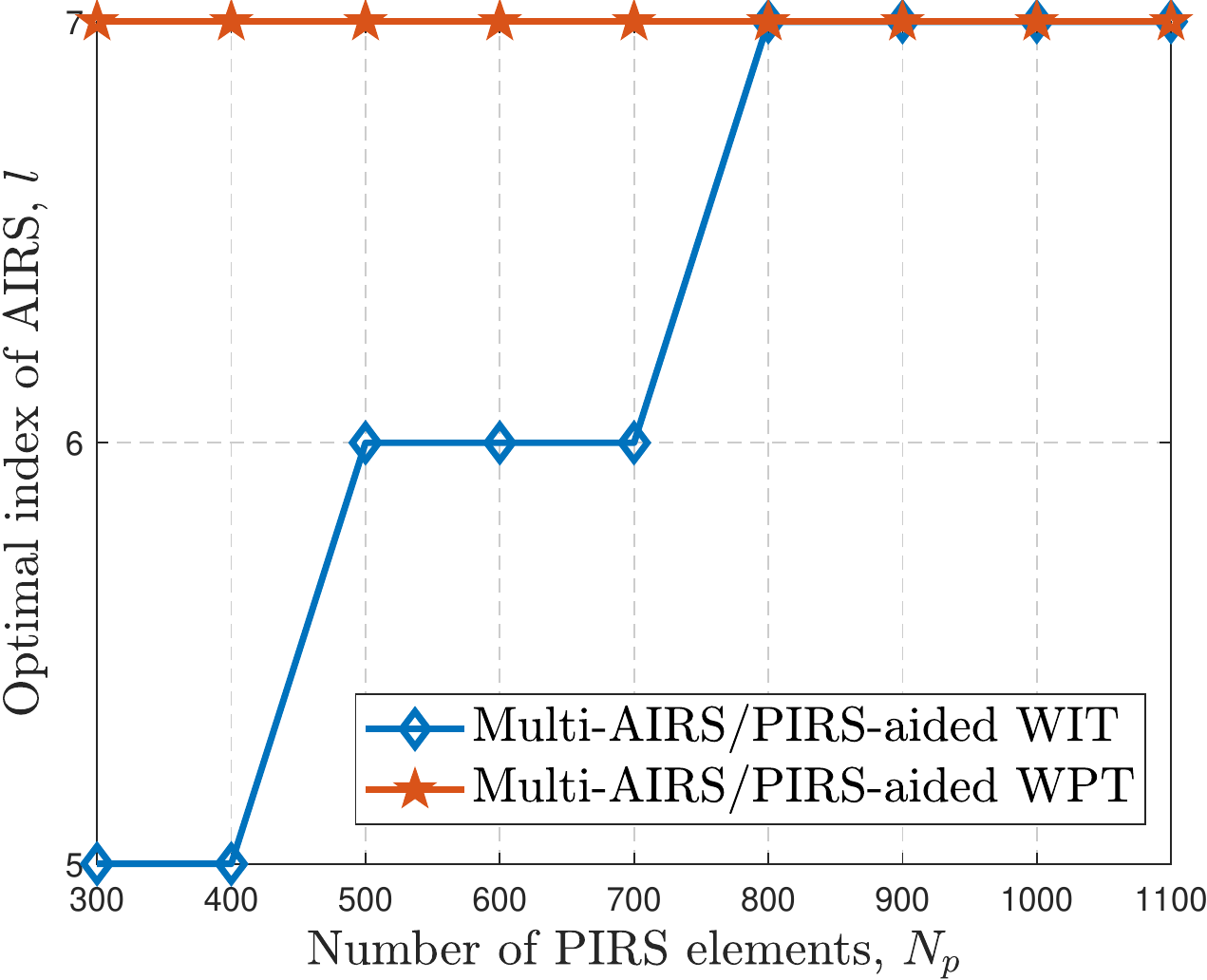}	
		\vspace{0mm}		
		\caption{Optimal AIRS deployment versus the number of elements per PIRS.} \label{Fig:index-Np}	
		\vspace{-5mm}
	\end{minipage}
	\hspace{1mm}
	\begin{minipage}{.32\textwidth}
		\centering
		\includegraphics[scale=0.35]{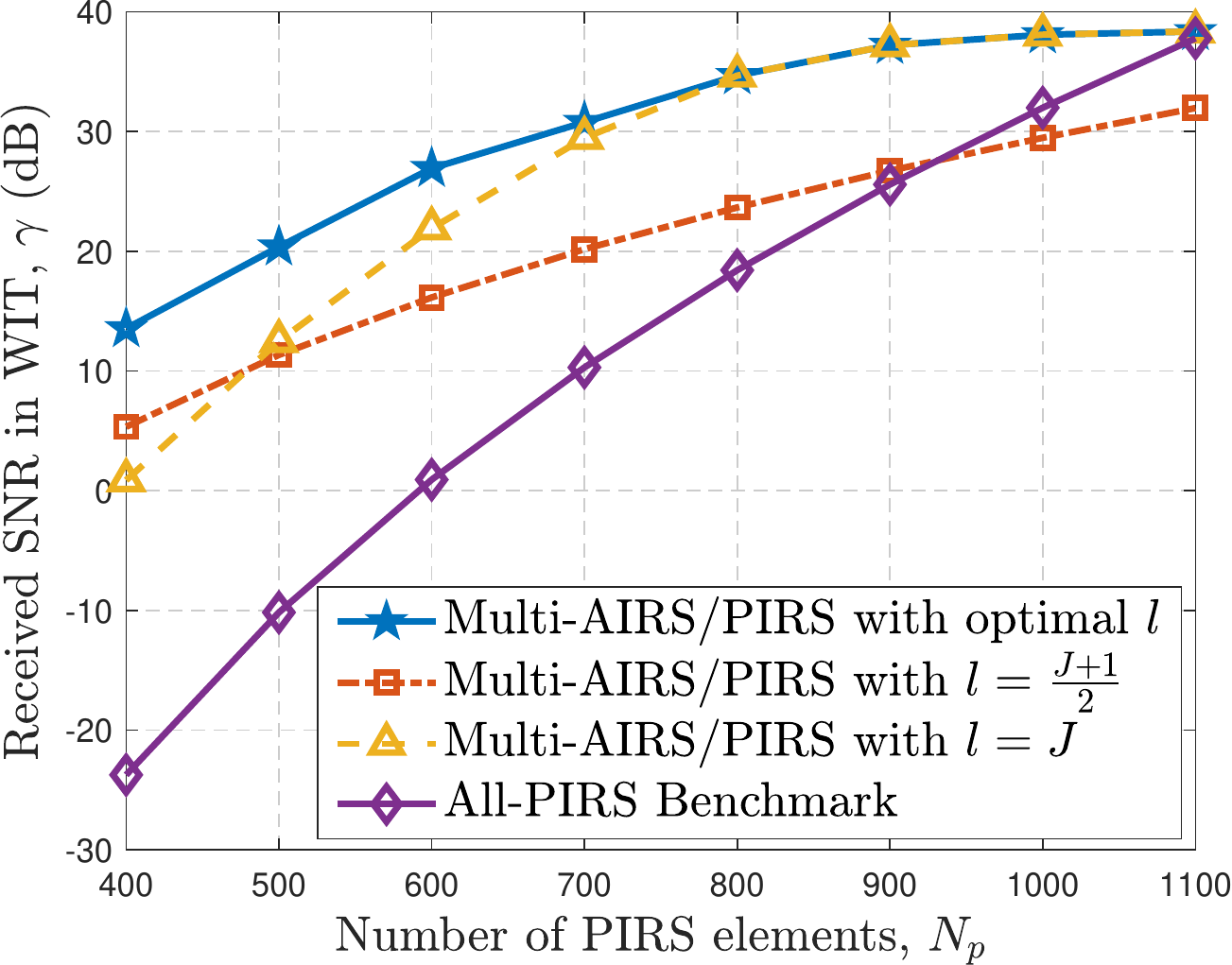}	
		\vspace{0mm}	
		\caption{Received SNR versus the number of elements per PIRS.} \label{Fig:WIT-Np}
		\vspace{-5mm}
	\end{minipage}
	\hspace{1mm}
	\begin{minipage}{.32\textwidth}
		\centering		
		\includegraphics[scale=0.35]{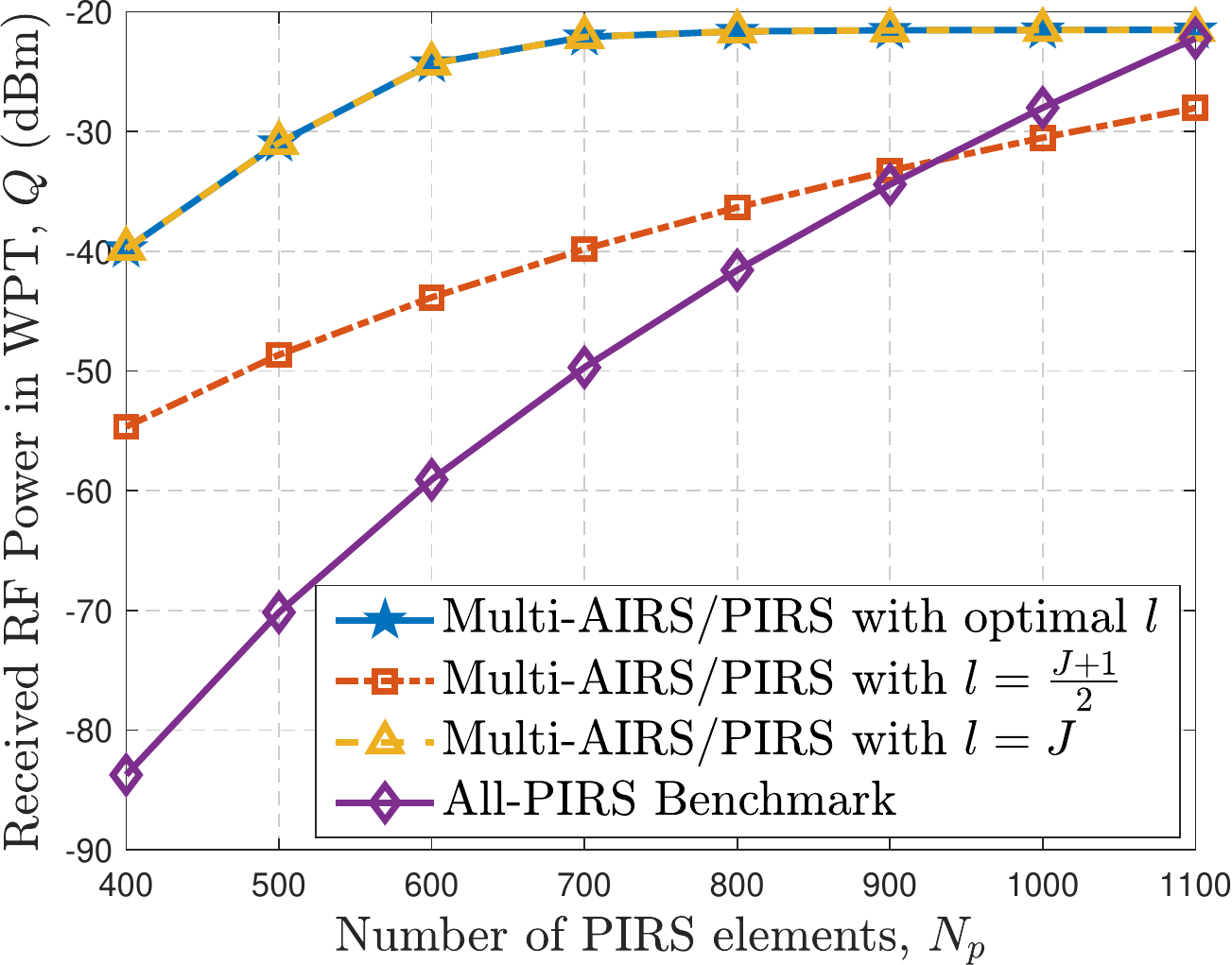}	
		\vspace{0mm}		
		\caption{Received power versus the number of elements per PIRS.} \label{Fig:WPT-Np}	
		\vspace{-5mm}
	\end{minipage}
	
\end{figure*} 
 
 \vspace{-1em}
 \subsection{Performance Comparison}
 
Similar to the WIT case, we next compare the optimal WPT performance with that by the two benchmark schemes presented in Section \ref{subsection:WIT}.  
By substituting $l^{\star}_{2} = J$ into \eqref{Eq:power-rewrite}, the maximum received power is given by \begin{eqnarray}\label{Eq:optimal power}
	Q_{\text{A}}^\star	 
	=\frac{C_a(C_tN_a(N_{p}\kappa_I)^{2(J-1)} + \sigma^2)}{ C_t(N_{p}\kappa_I)^{2(J-1)} + \sigma^2 }.
\end{eqnarray}

\subsubsection{Comparison with Scheme 1}
\begingroup
\allowdisplaybreaks
In Scheme 1, by substituting $l = \frac{J+1}{2}$ into \eqref{Eq:power-rewrite}, the received power is obtained as
\begin{eqnarray}\label{Eq:Q-mid}
{Q}_{\text{A}}^{\text{mid}} 	 
	=\frac{C_a(C_tN_a(N_{p}\kappa_I)^{2(J-1)} + \sigma^2(N_{p}\kappa_I)^{(J-1)})}{ C_t(N_{p}\kappa_I)^{(J-1)} + \sigma^2 }.
\end{eqnarray}
By comparing \eqref{Eq:optimal power} with  \eqref{Eq:Q-mid}, we have 
\begin{eqnarray}\label{NEq:power-optimal-mid}
\frac{Q_{\text{A}}^\star}{{Q}_{\text{A}}^{\text{mid}} }= (N_{p}\kappa_I)^{(1-J)}  \big(1+\rho_1\big) 
\overset{\sigma^2 \rightarrow 0}{\longrightarrow} (N_{p}\kappa_I)^{(1-J)},
\end{eqnarray} 
where $\rho_1\! =\! \frac{\sigma^2C_t(N_{p}\kappa_I)^{(J\!-\!1)}((1-N_{p}\kappa_I)^{(J\!-\!1)})(1-N_a) }{C^2_tN_a(N_{p}\kappa_I)^{3(J\!-\!1)} \!+\! C_t\sigma^2(N_{p}\kappa_I)^{(J\!-\!1)}(N_a \!+\! (N_{p}\kappa_I)^{(J\!-\!1)} )  \!+\! \sigma^4}$.
This implies that in the high transmit power regime, the performance gain of the former over the latter becomes more significant as $J$ increases and/or $N_p$ decreases.

  \subsubsection{Comparison with Scheme 2}
 In Scheme 2, the maximum received power is given by (excluding the receiver noise)
 \begin{eqnarray}\label{Eq:all-passive-power}
 	&&Q_{\text{P}} = (P_{\rm t} + N_aP_{a})M\kappa^2_B\kappa^2_UN^{2}_{p}(N_{p}\kappa_I)^{2(J-1)}.
 \end{eqnarray}	
 By comparing \eqref{Eq:optimal power} with  \eqref{Eq:all-passive-power}, we have
  \begingroup\makeatletter\def\f@size{9.2}\check@mathfonts\def\maketag@@@#1{\hbox{\m@th\normalsize\normalfont#1}}
\begin{eqnarray}
\frac{{Q}_{\text{A}}^\star}{Q_{\text{P}}} &=& \frac{C_aN_a}{N^{2J}_{p}\kappa_I^{2(J-1)}(C_t\kappa_U^2 + C_aM\kappa_B^2)}(1+\rho_2) \nonumber\\
&\overset{\sigma^2 \rightarrow 0}{\longrightarrow}&
	\frac{C_aN_a}{N^{2J}_{p}\kappa_I^{2(J-1)}(C_t\kappa_U^2 + C_aM\kappa_B^2)},\label{NEq:power-optimal-all}
\end{eqnarray} \normalsize
where $\rho_2 = \frac{\sigma^2(1-N_a)}{C_tN_a(N_{p}\kappa_I)^{2(J-1)} + N_a\sigma^2}$.
By letting  the RHS of \eqref{NEq:power-optimal-all} be larger than one, it follows that in the high transmit power regime, we have ${Q}_{\text{A}}^\star> Q_{\text{P}} $ if and only if
 \begin{eqnarray}\label{Neq:WPT comparison-Np}
	N_p < \Big(\frac{N_a^2 P_a}{P_{\rm t}M\kappa^2_B + P_{a}N_aM\kappa^2_B}\Big)^{\frac{1}{2J}}{\kappa_{I}^{\frac{1-J}{J}}},
\end{eqnarray}	
 which implies that the multi-AIRS/PIRS system outperforms the all-PIRS system if $N_p$ is small in WPT, which is the same as the WIT case.

  \vspace{-1em}
\section{Numerical Results}\label{Section:results}
In this section, we present numerical results to validate the theoretical analysis in this letter.
Unless specified otherwise, the simulation parameters are set as follows.
The number of IRSs is set as $J = 7$ with $d_{B} = 4$ m, $d_{U} = 4$ m, and $d_{I} = 10$ m.
The carrier frequency is set to 3.5 GHz and thus, the reference LoS path gain is $\beta_0 = (\lambda/4\pi)^2 = -43$ dB, with the wavelength $\lambda = 0.087$ m.
Other simulation parameters include $P_{\rm t} = 30$ dBm, $P_{a} = -10$ dBm, $\sigma^2 = -60$ dBm,  $\alpha = 2$,  and $M = 10$.
The number of AIRS reflecting elements is set as $N_{a} = 150$, under which Case I in \eqref{Eq:solution-WIT-case1} holds.

Fig. \ref{Fig:index-Np} shows the optimal index of the AIRS in WIT/WPT versus the number of  elements per PIRS. 
It is observed that in the WIT case, the optimal AIRS index $l$ is closer to $J$ as $N_p$ increases; while in the WPT case, it  always equals $J$. 
The above results are consistent with our theoretical results in Propositions \ref{Proposition: WIT} and \ref{Proposition: WPT}, respectively.

Fig.\,\ref{Fig:WIT-Np} shows the received SNRs in WIT by different schemes versus $N_p$.
First, it is observed that the performance gap between optimal $l$ (i.e., $l=l_1^\star$) and $l=J$ shrinks with $N_p$, which is consistent with our analytical result in \eqref{Eq:solution-WIT-case1}.
Moreover, the optimal AIRS deployment always yields better performance than deploying it at the middle IRS (i.e., $l\!=\!\frac{J+1}{2}$).
Furthermore, all of the above schemes are observed to significantly outperform the all-PIRS system when $N_p$ is sufficiently small, which is consistent with our result in Section \ref{subsection:WIT}.

Fig.\,\ref{Fig:WPT-Np} shows the received power in WPT by different schemes versus $N_p$.
As expected, it is observed that the optimal AIRS deployment yields the same performance as that with $l=J$ and significantly outperforms that with $l\!=\!\frac{J+1}{2}$. Moreover, similar to the observations made in Fig. \ref{Fig:WIT-Np}, the multi-AIRS/PIRS system can yield a high performance gain over the all-PIRS system when $N_p$ is small.
The observations from Figs. 3-4 indicate that the hybrid multi-IRS-reflection with optimized AIRS deployment can yield better performance than the all-PIRS reflection, and the AIRS needs to be deployed at different locations in the case of WIT versus WPT.

  \vspace{-0.5em}
\section{Conclusion}\label{Section:conclusion}
In this letter, we study the optimal AIRS deployment for multi-AIRS/PIRS enabled  WIT and WPT. 
It is shown that the optimal AIRS deployment for WPT is always at the final IRS, while this is not the case for WIT in general.
We also derive the performance for WIT/WPT analytically and demonstrate the importance of AIRS deployment in the multi-AIRS/PIRS system as well as its performance gain over the conventional all-PIRS system.			
It is interesting to study the AIRS deployment design in the cases with simultaneous wireless information and power transfer (SWIPT), multiple AIRSs, and/or multiple distributed users in future work.

 \vspace{-0.5em}

\bibliographystyle{IEEEtran}
\bibliography{ref} 

\begin{thebibliography}{10}
\providecommand{\url}[1]{#1}
\csname url@samestyle\endcsname
\providecommand{\newblock}{\relax}
\providecommand{\bibinfo}[2]{#2}
\providecommand{\BIBentrySTDinterwordspacing}{\spaceskip=0pt\relax}
\providecommand{\BIBentryALTinterwordstretchfactor}{4}
\providecommand{\BIBentryALTinterwordspacing}{\spaceskip=\fontdimen2\font plus
\BIBentryALTinterwordstretchfactor\fontdimen3\font minus
  \fontdimen4\font\relax}
\providecommand{\BIBforeignlanguage}[2]{{%
\expandafter\ifx\csname l@#1\endcsname\relax
\typeout{** WARNING: IEEEtran.bst: No hyphenation pattern has been}%
\typeout{** loaded for the language `#1'. Using the pattern for}%
\typeout{** the default language instead.}%
\else
\language=\csname l@#1\endcsname
\fi
#2}}
\providecommand{\BIBdecl}{\relax}
\BIBdecl

\bibitem{Wu2021Tutorial}
Q.~Wu, S.~Zhang, B.~Zheng, C.~You, and R.~Zhang, ``Intelligent reflecting
  surface-aided wireless communications: A tutorial,'' \emph{IEEE Trans.
  Commun.}, vol.~69, no.~5, pp. 3313--3351, May 2021.

\bibitem{Mei2022Multireflection}
W.~Mei, B.~Zheng, C.~You, and R.~Zhang, ``Intelligent reflecting surface-aided
  wireless networks: From single-reflection to multireflection design and
  optimization,'' \emph{Proc. IEEE}, no.~9, pp. 1380--1400, Sept. 2022.

\bibitem{Mei2021Cooperative}
W.~Mei and R.~Zhang, ``Cooperative beam routing for multi-{IRS} aided
  communication,'' \emph{IEEE Wireless Commun. Lett.}, vol.~10, no.~2, pp.
  426--430, Feb. 2021.

\bibitem{Mei2022MIMO}
------, ``Multi-beam multi-hop routing for intelligent reflecting surfaces
  aided massive {MIMO},'' \emph{IEEE Trans. Wireless Commun.}, vol.~21, no.~3,
  pp. 1897--1912, Mar. 2022.

\bibitem{Mei2021BeamTraining}
------, ``Distributed beam training for intelligent reflecting surface enabled
  multi-hop routing,'' \emph{IEEE Wireless Commun. Lett.}, vol.~10, no.~11, pp.
  2489--2493, Nov. 2021.

\bibitem{Wang2022MultipleRIS}
Y.~Wang, W.~Zhang, Y.~Chen, C.-X. Wang, and J.~Sun, ``Novel multiple
  {RIS}-assisted communications for {6G} networks,'' \emph{IEEE Wireless
  Commun. Lett.}, vol.~26, no.~6, pp. 1413--1417, Jun. 2022.

\bibitem{Liu2022intelligent}
Y.~Liu, L.~Zhang, F.~Gao, and M.~A. Imran, ``Intelligent reflecting surface
  networks with multi-order-reflection effect: System modelling and critical
  bounds,'' \emph{IEEE Trans. Commun.}, vol.~70, no.~10, pp. 6992--7005, Oct.
  2022.

\bibitem{You2021active}
C.~You and R.~Zhang, ``Wireless communication aided by intelligent reflecting
  surface: Active or passive?'' \emph{IEEE Wireless Commun. Lett.}, vol.~10,
  no.~12, pp. 2659--2663, Dec. 2021.

\bibitem{Liang2022dual}
R.~Liang, J.~Fan, H.~Liu, Y.~Ge, and J.~Zhang, ``Dual-hop hybrid {IRS}-aided
  outdoor-to-indoor mm{W}ave communications,'' \emph{IEEE Commun. Lett.},
  vol.~26, no.~12, pp. 2979--2983, Dec. 2022.

\bibitem{Fu2022active}
M.~Fu and R.~Zhang, ``Active and passive {IRS} jointly aided communication:
  Deployment design and achievable rate,'' \emph{IEEE Wireless Commun. Lett.},
  vol.~12, no.~2, pp. 302--306, Feb. 2023.

\bibitem{Zhang2022multi}
Y.~Zhang and C.~You, ``Multi-hop beam routing for hybrid active/passive {IRS}
  aided wireless communications,'' in \emph{Proc. IEEE Global Commun. Conf.},
  Rio de Janeiro, Brazil, Dec. 2022.

\bibitem{Wu2022WIT-WPT}
Q.~Wu, X.~Guan, and R.~Zhang, ``Intelligent reflecting surface-aided wireless
  energy and information transmission: An overview,'' \emph{Proc. IEEE}, vol.
  110, no.~1, pp. 150--170, Jan. 2022.

\end{thebibliography}

\end{document}